\begin{document}

\title{How to Diagonalize a MIMO Channel with Arbitrary Transmit Covariance?}

\author{Liang Liu, \IEEEmembership{Member,~IEEE}, and Rui Zhang, \IEEEmembership{Senior Member,~IEEE}
\thanks{Manuscript received February 5, 2016, revised March 26, 2016, accepted April 22, 2016. The associate editor
coordinating the review of this letter and approving it for
publication was Dr. Saif Khan Mohammed.}
\thanks{L. Liu is with the
Department of Electrical and Computer Engineering, University of Toronto (e-mail:lianguot.liu@utoronto.ca).}
\thanks{R. Zhang is with the
Department of Electrical and Computer Engineering, National
University of Singapore (e-mail:elezhang@nus.edu.sg).}}

\maketitle

\begin{abstract}
Multiple-input multiple-output (MIMO) or multi-antenna communication is a key technique to achieve high spectral efficiency in wireless systems. For the point-to-point MIMO channel, it is a well-known result that the channel singular value decomposition (SVD) based linear precoding and decoding achieves the channel capacity, which also diagonalizes the MIMO channel into parallel single-input single-output (SISO) sub-channels for independent encoding and decoding. However, in multi-user MIMO systems, the optimal transmit covariance of each MIMO link is generally not its channel SVD based as a result of the control and balance of the co-channel interference among users. Thus, it remains unknown whether the linear precoding/decoding strategy is still able to achieve the capacity of each MIMO link and yet diagonalize its MIMO channel, with a given set of optimal transmit covariance of all users. This letter solves this open problem by providing a closed-form capacity-achieving linear precoder/decoder design that diagonalizes a MIMO channel with arbitrary transmit covariance. Numerical examples are also provided to validate the proposed solution in various multi-user MIMO systems.
\end{abstract}

%\begin{keywords}

%Wireless power transfer, energy beamforming, wireless powered communication, non-negative matrix theory.

%\end{keywords}

\setlength{\baselineskip}{1.0\baselineskip}
\newtheorem{definition}{\underline{Definition}}[section]
\newtheorem{fact}{Fact}
\newtheorem{assumption}{Assumption}
\newtheorem{theorem}{\underline{Theorem}}[section]
\newtheorem{lemma}{\underline{Lemma}}[section]
\newtheorem{corollary}{\underline{Corollary}}[section]
\newtheorem{proposition}{\underline{Proposition}}[section]
\newtheorem{example}{\underline{Example}}[section]
\newtheorem{remark}{\underline{Remark}}[section]
\newtheorem{algorithm}{\underline{Algorithm}}[section]
\newcommand{\mv}[1]{\mbox{\boldmath{$ #1 $}}}

\vspace{-15pt}
\section{Introduction}\label{sec:Introduction}\vspace{-5pt}
Multi-antenna or so-called multiple-input multiple-output (MIMO) technique has received consistently significant attention in both single-user and multi-user wireless communications to achieve enormous spatial multiplexing and/or diversity gains (see, e.g., \cite{Telarta99} -- \cite{Marzetta10}). When the channel is perfectly known at both the transmitter and receiver and under the point-to-point single-user setup, it is a well-known result that the MIMO channel singular value decomposition (SVD) based linear precoding and decoding achieves the capacity \cite{Cover}. Moreover, the capacity-achieving SVD-based linear precoder and decoder diagonalizes the MIMO channel into parallel single-input single-output (SISO) sub-channels for independent encoding and decoding, which makes spatial multiplexing practically implementable with low transceiver complexity. In \cite{Jiang05}, a uniform channel decomposition (UCD) based linear precoding is proposed to decompose the MIMO channel into parallel sub-channels with equal single-to-noise ratio (SNR) to apply the same modulation scheme in practice. This linear precoder design also achieves the MIMO channel capacity; however, unlike the SVD-based design, the receiver of UCD needs to apply the non-linear minimum mean-squared-error (MMSE) based decoding with successive interference cancellation (SIC) \cite{Jiang05}. On the other hand, for the case when the channel is not known at the transmitter, the Vertical Bell Labs Layered Space-Time (V-BLAST) architecture is proposed in \cite{Foschini96}, which applies the isotropic transmission without precoding and the non-linear MMSE receiver with SIC. Interestingly, it is shown in \cite{Cioffi01} that the V-BLAST receiver is a special case of the generalized decision feedback equalizer (GDFE) for generic MIMO systems. However, like the UCD-based transmission, the non-linear (instead of linear) receiver is necessary to achieve the MIMO channel capacity with V-BLAST.

This letter extends the study of linear transceiver design for achieving the MIMO channel capacity with perfect channel knowledge by addressing the following question: given an arbitrary transmit covariance in a point-to-point MIMO channel, is there always a linear precoder and decoder solution that not only achieves the channel capacity but also diagonalizes the MIMO channel for parallel SISO processing? This is mainly motivated by multi-user MIMO communication systems with the co-channel interference among the users. In these systems, the optimal transmit covariance of each user's MIMO link for achieving the system's maximum throughput such as weighted sum-rate of all users depends on the user's direct MIMO channel as well as all other users' direct and cross-link MIMO channels (see, e.g., \cite{Luo11} -- \cite{Zhang12}), which is thus not its direct MIMO channel SVD based in general and cannot diagonalize the direct MIMO channel. This letter solves this problem by providing a closed-form capacity-achieving linear precoder/decoder design that diagonalizes a MIMO channel with arbitrary transmit covariance. Moreover, rich numerical examples are provided to validate the proposed solution in various multi-user MIMO systems.

{\it Notation}: $\mv{I}$ and $\mv{0}$  denote an
identity matrix and an all-zero matrix, respectively, with
appropriate dimensions. For a square matrix $\mv{S}$, $\mv{S}^{-1}$ and ${\rm det}(\mv{S})$
denote its inverse (if $\mv{S}$ is full-rank) and determinant, respectively; $\mv{S}\succeq\mv{0}$ means that $\mv{S}$ is positive semi-definite. For a matrix
$\mv{M}$ of arbitrary size, $\mv{M}^{H}$ and $\mv{M}^{T}$ denote the
conjugate transpose and transpose of $\mv{M}$, respectively; ${\rm rank}(\mv{M})$ denotes the rank of $\mv{M}$. ${\rm
diag}(x_1,\cdots,x_K)$ denotes a diagonal matrix
with diagonal elements given by $x_1,\cdots,x_K$. The
distribution of a circularly
symmetric complex Gaussian (CSCG) random vector with mean $\mv{x}$ and
covariance matrix $\mv{\Sigma}$ is denoted by
$\mathcal{CN}(\mv{x},\mv{\Sigma})$; and $\sim$ stands for
``distributed as''. $\mathbb{C}^{x \times y}$ denotes the space of
$x\times y$ complex matrices. $\|\mv{x}\|$ denotes the Euclidean norm of a complex vector
$\mv{x}$. $\min(a,b)$ denotes the minimum of two real numbers $a$ and $b$. $E[\cdot]$ denotes the statistical expectation.

\vspace{-15pt}
\section{System Model}\label{sec:System Model}\vspace{-5pt}
Consider a point-to-point MIMO channel consisting of one transmitter equipped with $M> 1$ antennas and one receiver with $N> 1$ antennas. We assume the MIMO channel is perfectly known at both the transmitter and receiver. The baseband transmitted signal is given as
\begin{align}
\mv{x}=\mv{V}\mv{s},
\end{align}where $\mv{s}=[s_1,\cdots,s_D]^T\sim \mathcal{CN}(\mv{0},\mv{I})$ denotes the information-bearing signals via spatial multiplexing over $D\leq \min(M,N)$ independent data streams, and $\mv{V}\in \mathbb{C}^{M\times D}$ denotes the linear precoder applied at the transmitter. The baseband received signal is then given as
\begin{align}\label{eqn:single-user signal}
\tilde{\mv{y}}=\tilde{\mv{H}}\mv{x}+\tilde{\mv{z}}=\tilde{\mv{H}}\mv{V}\mv{s}+\tilde{\mv{z}},
\end{align}where $\tilde{\mv{H}}\in \mathbb{C}^{N\times M}$ denotes the MIMO channel, and $\tilde{\mv{z}}=[\tilde{z}_1,\cdots,\tilde{z}_N]^T\sim \mathcal{CN}(\mv{0},\mv{S}_z)$ denotes the noise at the multi-antenna receiver with the covariance matrix $\mv{S}_z\triangleq E[\tilde{\mv{z}}\tilde{\mv{z}}^H]$. Without loss of generality, we assume $D\leq {\rm rank}(\tilde{\mv{H}})$. Note that $\tilde{\mv{z}}$  is in general not spatially white since it may include the co-channel interference (assumed to be Gaussian distributed) from other transmitters in multi-user communication setups (see Section \ref{sec:Numerical Results} for examples). At the receiver, without loss of optimality, a noise-whitening filter can be applied to obtain
\begin{align}\label{eqn:channel}
\mv{y}=\mv{S}_z^{-\frac{1}{2}}\tilde{\mv{y}}=\mv{H}\mv{V}\mv{s}+\mv{z},
\end{align}where $\mv{H}=\mv{S}_z^{-\frac{1}{2}}\tilde{\mv{H}}\in \mathbb{C}^{N\times M}$ is the effective MIMO channel and $\mv{z}=\mv{S}_z^{-\frac{1}{2}}\tilde{\mv{z}}\in \mathbb{C}^{N\times N}$ with $\mv{z}\sim \mathcal{CN}(\mv{0},\mv{I})$ denotes the effective Gaussian noise.

First, we consider the case of linear receiver. In this case, the received signal in (\ref{eqn:channel}) is multiplied by a linear decoding matrix $\mv{U}^H\in \mathbb{C}^{D\times N}$, i.e., \begin{align}\label{eqn:receive beamforming}
\hat{\mv{y}}=\mv{U}^H\mv{y}=\mv{U}^H\mv{H}\mv{V}\mv{s}+\mv{U}^H\mv{z}.
\end{align}Let $\hat{\mv{y}}\triangleq[\hat{y}_1,\cdots,\hat{y}_D]^T$. A pair of linear precoder and linear decoder is called ``diagonalizing'' the MIMO channel $\mv{H}$ if $\mv{U}^H\mv{H}\mv{V}$ in (\ref{eqn:receive beamforming}) is a diagonal matrix. As a result, the MIMO channel in (\ref{eqn:receive beamforming}) is  decomposed into $D$ non-interfering parallel SISO sub-channels given by
\begin{align}\label{eqn:channel diagonalization}
\hat{y}_d=\mv{u}_d^H\mv{H}\mv{v}_ds_d+\hat{z}_d, ~~~ d=1,\cdots,D,
\end{align}where $\mv{u}_d$ and  $\mv{v}_d$ denote the $d$th columns of $\mv{U}$ and $\mv{V}$, respectively, and $\hat{z}_d=\mv{u}_d^H\mv{z}\sim \mathcal{CN}(0,\|\mv{u}_d\|^2)$. The SNR for decoding the information in $s_d$ is thus given by
\begin{align}
\gamma_d=\frac{| \mv{u}_d^H\mv{H}\mv{v}_d|^2}{\|\mv{u}_d\|^2}, ~~~ d=1,\cdots,D.
\end{align}As a result, the achievable sum-rate in bits/sec/Hz (bps/Hz) over all $D$ sub-channels with a given pair of linear precoder $\mv{V}$ and linear decoder $\mv{U}^H$ is
\begin{align}\label{eqn:achievable rate}
R=\sum\limits_{d=1}^D\log_2(1+\gamma_d)=\sum\limits_{d=1}^D\log_2\left(1+\frac{| \mv{u}_d^H\mv{H}\mv{v}_d|^2}{\|\mv{u}_d\|^2}\right).
\end{align}An illustration of the above linear precoding/decoding scheme that diagonalizes the MIMO channel is given in Fig. \ref{fig1}.

\begin{figure}
\begin{center}
\scalebox{0.45}{\includegraphics*[angle=90]{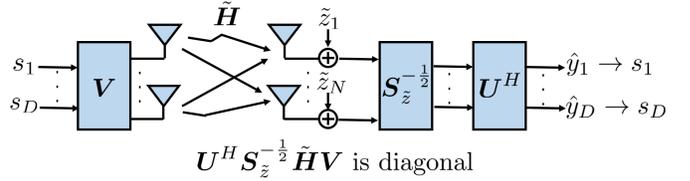}}
\end{center}\vspace{-10pt}
\caption{Illustration of the linear transceiver architecture that diagonalizes MIMO channels.}\label{fig1} \vspace{-15pt}
\end{figure}

Next, we consider the capacity of the MIMO channel where the covariance matrix of the transmit signal is constrained to be a given $\mv{S}_x\triangleq E[\mv{x}\mv{x}^H]\succeq \mv{0}$. Given $\mv{S}_x$, the capacity of the point-to-point MIMO channel given in (\ref{eqn:channel}) in bps/Hz is \cite{Cover}
\begin{align}\label{single-user capacity}
C(\mv{S}_x)=\log_2 \det \left(\mv{I}+\mv{H}\mv{S}_x\mv{H}^H\right).
\end{align}In general, for an arbitrary transmit covariance $\mv{S}_x\succeq \mv{0}$, the corresponding linear precoder $\mv{V}$ obtained directly via the eigenvalue value decomposition (EVD) of $\mv{S}_x$, i.e., $\mv{S}_x=\mv{V}\mv{V}^H$ (with $\mv{V}$ given in (\ref{eqn:zf precoder})), and a linear decoder $\mv{U}^H$ which jointly diagonalize the MIMO channel $\mv{H}$ is capacity-suboptimal, i.e., $R\leq C(\mv{S}_x)$. However, there is a special case when $R=C(\mv{S}_x)$ if the linear precoder $\mv{V}$ is designed based on the SVD of the MIMO channel $\mv{H}$. Specifically, let $D_H={\rm rank}(\mv{H})$ and express the truncated SVD of $\mv{H}$ as $\mv{H}=\mv{U}_H\mv{\Lambda}_H\mv{V}_H^H$, where $\mv{U}_H\in \mathbb{C}^{N\times D_H}$ with $\mv{U}_H^H\mv{U}_H=\mv{I}$, $\mv{V}_H\in \mathbb{C}^{M\times D_H}$ with $\mv{V}_H^H\mv{V}_H=\mv{I}$, and $\mv{\Lambda}_H$ is a $D_H$-by-$D_H$ positive diagonal matrix. Let\begin{align}
& \mv{V}=\mv{V}_H\mv{P}^{\frac{1}{2}}, \label{optimal transmit signal} \\
& \mv{U}^H=\mv{U}_H^H, \label{single-user signal eigenmode}
\end{align}where $\mv{P}$ is a $D_H$-by-$D_H$ positive diagonal matrix. Then, it can be easily verified that the pair of linear precoder $\mv{V}$ and linear decoder $\mv{U}^H$ given in (\ref{optimal transmit signal}) and (\ref{single-user signal eigenmode}) diagonalize the MIMO channel $\mv{H}$, which is referred to as the channel SVD based linear precoder/decoder design. Furthermore, if the water-filling power allocation is applied to design $\mv{P}$, it is known that the resulting transmit covariance,
\begin{align}\label{optimal covariance}
\mv{S}_x=\mv{V}_H\mv{P}\mv{V}_H^H,
\end{align}achieves the capacity of the MIMO channel, i.e., $R=C(\mv{S}_x)$ \cite{Cover}.

However, if the transmit covariance $\mv{S}_x$ is not in the form of (\ref{optimal covariance}), the channel SVD based linear precoding/decoding cannot be applied, as shown in the following example. Define the EVD of $\mv{S}_x$ as $\mv{S}_x=\mv{U}_x\mv{\Lambda}_x\mv{U}_x^H$, where $\mv{U}_x\in \mathbb{C}^{M\times D}$ with $\mv{U}_x^H\mv{U}_x=\mv{I}$, and $\mv{\Lambda}_x$ is a $D$-by-$D$ positive diagonal matrix. Then, given any $\mv{S}_x\succeq \mv{0}$, consider the linear precoder designed based on the EVD of $\mv{S}_x$ as
\begin{align}\label{eqn:zf precoder}
\mv{V}=\mv{S}_x^{\frac{1}{2}}=\mv{U}_x\mv{\Lambda}_x^{\frac{1}{2}}.
\end{align}In order to diagonalize the MIMO channel, the receiver applies a linear decoder $\mv{U}^H$ such that $\mv{U}^H\mv{H}\mv{V}$ is diagonal. One possible choice of $\mv{U}^H$ is the zero-forcing (ZF) receiver:
\begin{align}\label{eqn:zf decoder}
\mv{U}^H=((\mv{H}\mv{V})^H\mv{H}\mv{V})^{-1}(\mv{H}\mv{V})^H.
\end{align}Next, we present a numerical example to compare the achievable rate with the above linear precoder/decoder to the channel capacity. We consider $M=N=2$, and
\begin{align*}
 \mv{H}=\left[\begin{array}{rr}0.8147 & \hspace{-5pt} 0.1270 \\ 0.9058 & \hspace{-5pt} 0.9134 \end{array} \right],  ~  \mv{S}_x=\left[\begin{array}{rr}0.2896 & \hspace{-5pt}-0.5654 \\ -0.5654 & \hspace{-5pt}1.8275 \end{array} \right].
\end{align*}With the linear precoder given in (\ref{eqn:zf precoder}) and the linear ZF receiver given in (\ref{eqn:zf decoder}) for this MIMO channel, we have
\begin{align*}
\mv{V}=\left[\begin{array}{rr}-0.4423 &\hspace{-5pt} 0.3066 \\ 1.3481 & \hspace{-5pt}0.1006 \end{array} \right], ~ \mv{U}=\left[\begin{array}{rr}-1.2832 & \hspace{-5pt}2.8846 \\ \hspace{-5pt}0.9116 & 0.6576 \end{array} \right].
\end{align*}It can be shown that with the above channel-diagonalizing linear precoder and decoder, the achievable rate given in (7) is $R=0.6452$ bps/Hz, while the capacity of this channel with the given $\mv{S}_x$ can be computed from (\ref{single-user capacity}) as $C(\mv{S}_x)=1.0103$ bps/Hz. Evidently, we have $R<C(\mv{S}_x)$ in this example. Notice that with the above precoder, the channel capacity $C(\mv{S}_x)$ needs to be achieved with non-linear MMSE decoder with SIC \cite{Cioffi01}.

\vspace{-10pt}
\section{Problem Formulation}\label{sec:Problem Formulation}\vspace{-2pt}
In this letter, for an arbitrary transmit covariance matrix $\mv{S}_x\succeq \mv{0}$, we aim to find a pair of linear precoder $\mv{V}$ and decoder $\mv{U}^H$ that diagonalizes the MIMO channel and yet achieves the channel capacity with the given $\mv{S}_x$, i.e., they need to satisfy the following three conditions:
\begin{align}
& \sum\limits_{d=1}^D\log_2\left(1+\frac{| \mv{u}_d^H\mv{H}\mv{v}_d|^2}{\|\mv{u}_d\|^2}\right)=\log_2 \det \left(\mv{I}+\mv{H}\mv{S}_x\mv{H}^H\right), \label{eqn:condition 1} \\
& \mv{U}^H\mv{H}\mv{V}={\rm diag}(\mv{u}_1^H\mv{H}\mv{v}_1,\cdots,\mv{u}_D^H\mv{H}\mv{v}_D), \label{eqn:condition 2} \\
& \mv{V}\mv{V}^H=\mv{S}_x. \label{eqn:condition 3}
\end{align}In the above, (\ref{eqn:condition 1}) is the capacity-achieving condition that ensures $R=C(\mv{S}_x)$, with $R$ and $C(\mv{S}_x)$ given in (\ref{eqn:achievable rate}) and (\ref{single-user capacity}), respectively; (\ref{eqn:condition 2}) is the channel diagonalization condition; while (\ref{eqn:condition 3}) is the transmit covariance condition.

\vspace{-10pt}
\section{Optimal Solution}\label{sec:Optimal Solution}\vspace{-2pt}
In this section, we show that given any transmit covariance $\mv{S}_x \succeq \mv{0}$, there always exists a linear precoder/decoder design that can diagonalize the MIMO channel and also achieve the channel capacity. Specifically, the following theorem presents a closed-form precoder/decoder solution that satisfies conditions in (\ref{eqn:condition 1}) -- (\ref{eqn:condition 3}) simultaneously.
\begin{theorem}\label{theorem1}
Given any transmit covariance $\mv{S}_x\succeq \mv{0}$ and $\mv{S}_x^{\frac{1}{2}}$ (see (\ref{eqn:zf precoder})), with $D={\rm rank}(\mv{S}_x^{\frac{1}{2}})={\rm rank}(\mv{H}\mv{S}_x^{\frac{1}{2}})$,\footnote{This holds without loss of generality since if ${\rm rank}(\mv{S}_x)>{\rm rank}(\mv{H}\mv{S}_x^{\frac{1}{2}})$, we can always construct a new transmit covariance $\bar{\mv{S}}_x$ with ${\rm rank}(\bar{\mv{S}}_x) = {\rm rank}(\mv{H}\bar{\mv{S}}_x^{\frac{1}{2}})$ which achieves the same capacity of the MIMO channel $\mv{H}$ with the given $\mv{S}_x$.} let the truncated SVD of $\mv{\Phi}=\mv{H}\mv{S}_x^{\frac{1}{2}}$ be given by
\begin{align}\label{eqn:Phi}
\mv{\Phi}=\mv{H}\mv{S}_x^{\frac{1}{2}}=\mv{U}_{\Phi}\mv{\Lambda}_{\Phi}\mv{V}_{\Phi}^H,
\end{align}where $\mv{U}_{\Phi}\in \mathbb{C}^{N\times D}$ with $\mv{U}_{\Phi}^H\mv{U}_{\Phi}=\mv{I}$, $\mv{V}_{\Phi}\in \mathbb{C}^{D\times D}$ with $\mv{V}_{\Phi}^H\mv{V}_{\Phi}=\mv{V}_{\Phi}\mv{V}_{\Phi}^H=\mv{I}$, and $\mv{\Lambda}_{\Phi}={\rm diag}(\phi_1, \cdots , \phi_D)$, with $\phi_d>0$, $d=1,\cdots,D$. Then, the following linear precoder/decoder design satisfies conditions (\ref{eqn:condition 1}) -- (\ref{eqn:condition 3}):
\begin{align}
& \mv{V}=\mv{S}_x^{\frac{1}{2}}\mv{V}_{\Phi},  \label{eqn:feasible V} \\
& \mv{U}^H=\mv{U}_{\Phi}^H.  \label{eqn:feasible U}
\end{align}
\end{theorem}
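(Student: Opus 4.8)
The plan is to verify the three requirements (\ref{eqn:condition 1})--(\ref{eqn:condition 3}) by direct substitution of the proposed precoder $\mv{V}=\mv{S}_x^{\frac{1}{2}}\mv{V}_{\Phi}$ and decoder $\mv{U}^H=\mv{U}_{\Phi}^H$, using only the orthogonality properties of the SVD factors of $\mv{\Phi}=\mv{H}\mv{S}_x^{\frac{1}{2}}$. No optimization or inequality argument is needed: everything reduces to bookkeeping with the two factorizations $\mv{S}_x^{\frac{1}{2}}=\mv{U}_x\mv{\Lambda}_x^{\frac{1}{2}}$ and $\mv{\Phi}=\mv{U}_{\Phi}\mv{\Lambda}_{\Phi}\mv{V}_{\Phi}^H$.

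First I would dispatch the covariance condition (\ref{eqn:condition 3}). Since $\mv{V}_{\Phi}$ is a $D\times D$ unitary matrix, $\mv{V}\mv{V}^H=\mv{S}_x^{\frac{1}{2}}\mv{V}_{\Phi}\mv{V}_{\Phi}^H(\mv{S}_x^{\frac{1}{2}})^H=\mv{S}_x^{\frac{1}{2}}(\mv{S}_x^{\frac{1}{2}})^H=\mv{U}_x\mv{\Lambda}_x\mv{U}_x^H=\mv{S}_x$. Next, for the channel-diagonalization condition (\ref{eqn:condition 2}), I substitute to obtain $\mv{U}^H\mv{H}\mv{V}=\mv{U}_{\Phi}^H(\mv{H}\mv{S}_x^{\frac{1}{2}})\mv{V}_{\Phi}=\mv{U}_{\Phi}^H\mv{\Phi}\mv{V}_{\Phi}=\mv{U}_{\Phi}^H\mv{U}_{\Phi}\mv{\Lambda}_{\Phi}\mv{V}_{\Phi}^H\mv{V}_{\Phi}=\mv{\Lambda}_{\Phi}$, which is diagonal; in particular the $d$th sub-channel gain is $\mv{u}_d^H\mv{H}\mv{v}_d=\phi_d$.

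For the capacity-achieving condition (\ref{eqn:condition 1}) I would first evaluate the left-hand side. The key point is that $\mv{u}_d$ is the $d$th column of $\mv{U}=\mv{U}_{\Phi}$, and since $\mv{U}_{\Phi}^H\mv{U}_{\Phi}=\mv{I}$ we have $\|\mv{u}_d\|^2=1$; combined with $\mv{u}_d^H\mv{H}\mv{v}_d=\phi_d$ this gives $\gamma_d=|\mv{u}_d^H\mv{H}\mv{v}_d|^2/\|\mv{u}_d\|^2=\phi_d^2$, so the left side of (\ref{eqn:condition 1}) equals $\sum_{d=1}^{D}\log_2(1+\phi_d^2)$. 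For the right side I write $\mv{H}\mv{S}_x\mv{H}^H=\mv{H}\mv{S}_x^{\frac{1}{2}}(\mv{S}_x^{\frac{1}{2}})^H\mv{H}^H=\mv{\Phi}\mv{\Phi}^H$, and apply Sylvester's determinant identity together with $\mv{\Phi}^H\mv{\Phi}=\mv{V}_{\Phi}\mv{\Lambda}_{\Phi}^2\mv{V}_{\Phi}^H$ to get $\det(\mv{I}+\mv{H}\mv{S}_x\mv{H}^H)=\det(\mv{I}+\mv{\Phi}^H\mv{\Phi})=\det(\mv{I}+\mv{\Lambda}_{\Phi}^2)=\prod_{d=1}^{D}(1+\phi_d^2)$, so that $\log_2\det(\mv{I}+\mv{H}\mv{S}_x\mv{H}^H)=\sum_{d=1}^{D}\log_2(1+\phi_d^2)$, which matches the left side.

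I do not expect a genuine obstacle here, since the argument is entirely algebraic. The one point deserving emphasis — and the conceptual reason the construction succeeds where the ZF-based design in the earlier example does not — is the noise-normalization factor $\|\mv{u}_d\|^2$ appearing in the achievable-rate expression (\ref{eqn:achievable rate}): taking $\mv{U}=\mv{U}_{\Phi}$ with orthonormal columns forces all these factors to one, so no per-stream SNR is lost in the linear decoding. I would also remark, for completeness, that the rank hypothesis $D=\mathrm{rank}(\mv{S}_x^{\frac{1}{2}})=\mathrm{rank}(\mv{H}\mv{S}_x^{\frac{1}{2}})$ (justified in the footnote) ensures that $\mv{\Lambda}_{\Phi}$ has strictly positive diagonal entries, so the truncated SVD in (\ref{eqn:Phi}) is well-defined and every $\gamma_d>0$.
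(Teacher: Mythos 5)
Your proposal is correct and follows essentially the same route as the paper: direct verification of conditions (\ref{eqn:condition 1})--(\ref{eqn:condition 3}) using the unitarity of $\mv{V}_{\Phi}$ and the orthonormal columns of $\mv{U}_{\Phi}$, with the capacity step resting on $\det(\mv{I}+\mv{A}\mv{B})=\det(\mv{I}+\mv{B}\mv{A})$ (your ``Sylvester's identity'' is the same fact the paper invokes). Your closing remarks on the role of $\|\mv{u}_d\|^2=1$ and of the rank hypothesis are accurate but not needed beyond what the paper already states.
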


\begin{proof}
For convenience, we first verify the channel diagonalization condition given in (\ref{eqn:condition 2}). With the linear precoder and decoder given in (\ref{eqn:feasible V}) and (\ref{eqn:feasible U}), respectively, from (\ref{eqn:Phi}) we have
\begin{align}\label{eqn:channel diagonalization}
\mv{U}^H\mv{H}\mv{V} = \mv{U}_{\Phi}^H\mv{H}\mv{S}_x^{\frac{1}{2}}\mv{V}_{\Phi} = \mv{\Lambda}_{\Phi}={\rm diag}(\phi_1,\cdots,\phi_D).
\end{align}

Next, consider the capacity-achieving condition given in (\ref{eqn:condition 1}). Since for $\mv{U}$ given in (\ref{eqn:feasible U}), we have $\|\mv{u}_d\|=1$, $\forall d$, and from (\ref{eqn:channel diagonalization}), we have  $\mv{u}_d^H\mv{H}\mv{v}_d=\phi_d>0$, $\forall d$, it follows that
\begin{align}
R=& \sum\limits_{d=1}^D\log_2\left(1+\frac{| \mv{u}_d^H\mv{H}\mv{v}_d|^2}{\|\mv{u}_d\|^2}\right)= \sum\limits_{d=1}^D \log_2(1+\phi_d^2) \nonumber \\ \overset{(a)}{=} &\log_2\det(\mv{I}+\mv{\Lambda}_{\Phi}^2) \nonumber \\ \overset{(b)}{=} & \log_2\det(\mv{I}+\mv{U}_{\Phi}\mv{\Lambda}_{\Phi}\mv{V}_{\Phi}^H\mv{V}_{\Phi}\mv{\Lambda}_{\Phi}\mv{U}_{\Phi}^H) \nonumber \\ \overset{(c)}{=} & \log_2\det(\mv{I}+\mv{H}\mv{S}_x\mv{H}^H)=C(\mv{S}_x),
\end{align}where $(a)$ is due to $\mv{\Lambda}_{\Phi}={\rm diag}(\phi_1,\cdots,\phi_D)$, $(b)$ is due to $\mv{V}_{\Phi}^H\mv{V}_{\Phi}=\mv{U}_{\Phi}^H\mv{U}_{\Phi}=\mv{I}$ and the fact that $\log_2 \det(\mv{I}+\mv{A}\mv{B})=\log_2 \det(\mv{I}+\mv{B}\mv{A})$, and $(c)$ is due to (\ref{eqn:Phi}).

Last, consider the transmit covariance condition given in (\ref{eqn:condition 3}). With $\mv{V}$ given in (\ref{eqn:feasible V}) and $\mv{V}_{\Phi}\mv{V}_{\Phi}^H=\mv{I}$, it follows that
\begin{align}
\mv{V}\mv{V}^H=\mv{S}_x^{\frac{1}{2}}\mv{V}_{\Phi}\mv{V}_{\Phi}^H(\mv{S}_x^{\frac{1}{2}})^H=\mv{S}_x.
\end{align}

Theorem \ref{theorem1} is thus proved.
\end{proof}

\begin{remark}\label{remark1}
It is worth noting that if the transmit covariance is given as (\ref{optimal covariance}), then we have $\mv{U}_{\Phi}=\mv{U}_H$ and $\mv{V}_{\Phi}=\mv{I}$ such that $\mv{V}=\mv{S}_x^{\frac{1}{2}}=\mv{V}_H\mv{P}^{\frac{1}{2}}$. As a result, the linear precoding and decoding solution given in (\ref{eqn:feasible V}) and (\ref{eqn:feasible U}) becomes that based on the channel SVD as given in (\ref{optimal transmit signal}) and (\ref{single-user signal eigenmode}).
\end{remark}

Theorem \ref{theorem1} is of practical significance in multi-user MIMO channels, when the optimal transmit covariance of each user is not given as (\ref{optimal covariance}). Specifically, we can first derive the optimal transmit covariance solutions for all users, and then with the obtained optimal transmit covariance of each user apply (\ref{eqn:feasible V}) and (\ref{eqn:feasible U}) to find the corresponding optimal linear precoder/decoder that achieves the channel capacity and also diagonalizes the MIMO channel of each user. Consider the same example given at the end of Section \ref{sec:System Model}. From (\ref{eqn:feasible V}) and (\ref{eqn:feasible U}), we obtain a new pair of channel-diagonalizing linear precoder and decoder as
\begin{align*}
 \mv{V}=\left[\begin{array}{rr}-0.2925 &\hspace{-5pt} 0.4517 \\ 1.2851 & \hspace{-5pt}-0.4195 \end{array} \right], ~ \mv{U}=\left[\begin{array}{rr}-0.0824 & \hspace{-5pt}0.9966 \\ \hspace{-5pt}0.9966 & 0.0825 \end{array} \right],
\end{align*}and the achievable rate given in (\ref{eqn:achievable rate}) is obtained as $R=1.0103$ bps/Hz. Thus, we have $R=C(\mv{S}_x)$ for the given MIMO channel and transmit covariance.

\vspace{-10pt}
\section{Numerical Results}\label{sec:Numerical Results}\vspace{-2pt}
In this section, we provide numerical examples to verify the effectiveness of the proposed design under different multi-user MIMO setups. Due to space limitations, we present the results only for the MIMO interference channel (IC) \cite{Luo11} and MIMO cognitive radio (CR) channel \cite{Rui08}, while the verification also holds for other multi-user MIMO systems, such as the MIMO multiple-access channel (MAC) \cite{Yu04} and  MIMO broadcast channel (BC) \cite{Zhang12}. In the following, we consider two-user systems where each user is equipped with two antennas, and the transmit power constraint for each user is $1$ dB. Moreover, we assume that the covariance of the background noise at each receiver is $\mv{I}$. We consider the real-valued channels for the ease of illustration.

{\bf Example 1: MIMO IC.} In this example, we consider a two-user MIMO IC. Let $\tilde{\mv{H}}_{j,k}$ denote the channel from transmitter $k$ to receiver $j$, $j,k=1,2$, with the following realization:
%\begin{align}
%& \mv{H}_{1,1}=\left[\begin{array}{rr}2.0108 & 0.3083 \\ 0.0256 & -0.9383 \end{array} \right], \label{eqn:1}\\
%& \mv{H}_{2,1}=\left[\begin{array}{rr}0.4270  & -0.5780 \\ 0.1946 & 0.0199 \end{array} \right], \label{eqn:2}\\
%& \mv{H}_{1,2}=\left[\begin{array}{rr}-0.2253 & -0.1253 \\ 0.0546 & -0.0950 \end{array} \right], \label{eqn:3}\\
%& \mv{H}_{2,2}=\left[\begin{array}{rr}1.6742 & 0.5301 \\ 0.1250 & -0.9521 \end{array} \right]. \label{eqn:4}
%\end{align}
\vspace{-5pt}\begin{align*}
\begin{array}{ll}  \hspace{-5pt}\tilde{\mv{H}}_{1,1}\hspace{-3pt}=\hspace{-3pt}\left[\begin{array}{rr}\hspace{-3pt}2.0108 & \hspace{-5pt}0.3083 \\ \hspace{-3pt}0.0256 & \hspace{-5pt}-0.9383 \end{array} \hspace{-3pt}\right],
& \hspace{-5pt}\tilde{\mv{H}}_{2,1}\hspace{-3pt}=\hspace{-3pt}\left[\begin{array}{rr}\hspace{-3pt}0.4270  & \hspace{-5pt}\hspace{-3pt}-0.5780 \\ 0.1946 & \hspace{-5pt}0.0199 \end{array} \hspace{-3pt}\right], \\
\hspace{-5pt} \tilde{\mv{H}}_{1,2} \hspace{-3pt}=\hspace{-3pt}\left[\begin{array}{rr}\hspace{-3pt}-0.2253 & \hspace{-5pt}-0.1253 \\ \hspace{-3pt}0.0546 & \hspace{-5pt}-0.0950 \end{array} \hspace{-3pt}\right],
& \hspace{-5pt}\tilde{\mv{H}}_{2,2}\hspace{-3pt}=\hspace{-3pt}\left[\begin{array}{rr}\hspace{-3pt}1.6742 & \hspace{-5pt}0.5301 \\\hspace{-3pt} 0.1250 & \hspace{-5pt}-0.9521 \end{array} \hspace{-3pt}\right].  \label{eqn:1} \end{array} \vspace{-10pt}
\end{align*}For the above MIMO IC, we consider the problem of maximizing the two users' sum-rate as considered in \cite{Luo11}, by treating the interference as additive noise at each receiver. In general, this problem is non-convex and thus difficult to solve optimally. Thus, we apply the weighted sum mean-squared-error minimization (WMMSE) algorithm proposed in \cite{Luo11} to obtain a pair of suboptimal linear precoders for user $1$ and user $2$:
\vspace{-5pt}\begin{align*}
\tilde{\mv{V}}_1=\left[\begin{array}{rr}2.4376 & \hspace{-5pt}-0.6131 \\ 1.4874  & \hspace{-5pt}1.2125 \end{array} \right],
~ \tilde{\mv{V}}_2=\left[\begin{array}{rr}1.9083 & \hspace{-5pt}-1.0758 \\ 1.0682 & \hspace{-5pt}2.0150 \end{array} \right].\vspace{-10pt}
\end{align*}The channel capacities of users 1 and 2 with the corresponding transmit covariance are $6.0141$ bps/Hz and $5.3520$ bps/Hz, respectively. However, it can be shown that with the ZF receiver given in (\ref{eqn:zf decoder}) to diagonalize the effective channel $\mv{H}_{k,k}=\mv{S}_{z_k}^{-\frac{1}{2}}\tilde{\mv{H}}_{k,k}$, where $\mv{S}_{z_k}=\tilde{\mv{H}}_{k,j}\tilde{\mv{V}}_j\tilde{\mv{V}}_j^H\tilde{\mv{H}}_{k,j}^H+\mv{I}$ with $j\neq k$ denotes the covariance matrix of the interference plus noise at receiver $k$, $k=1,2$, the achievable rate of each user is strictly less than its capacity. Instead, with the above precoders, to achieve each user's capacity, the non-linear MMSE receiver with SIC needs to be applied \cite{Cioffi01}.

With Theorem \ref{theorem1}, we can construct the following linear precoder and decoder for user $1$:
\vspace{-5pt}\begin{align*}
\mv{V}_1=\left[\begin{array}{rr}-2.4841 & \hspace{-5pt}0.3836 \\ -1.3681  & \hspace{-5pt}-1.3456 \end{array} \right], ~ \mv{U}_1=\left[\begin{array}{rr}-0.8449 & \hspace{-5pt}0.2119 \\ 0.2368  & \hspace{-5pt}0.9408 \end{array} \right].\vspace{-10pt}
\end{align*}With this new design, it can be shown that $\mv{V}_1\mv{V}_1^H=\tilde{\mv{V}}_1\tilde{\mv{V}}_1^H$, i.e., with the same transmit covariance. Moreover, $\mv{U}_1^H\mv{H}_{1,1}\mv{V}_1$ is a diagonal matrix, i.e., the MIMO channel is diagonalized. Then, by decoding each data stream independently over the two parallel sub-channels, the achievable sum-rate of user 1 is $6.0141$ bps/Hz, which is the same as its channel capacity. Similar result also applies to user $2$, the details of which are omitted for brevity.

{\bf Example 2: MIMO CR.} Next, we consider a MIMO CR network consisting of one secondary user and one primary user under spectrum sharing. The channels from the secondary user transmitter to its receiver and the primary user receiver are denoted by $\tilde{\mv{H}}$ and $\tilde{\mv{G}}$, respectively. For convenience, we set $\tilde{\mv{H}}=\tilde{\mv{H}}_{1,1}$ and $\tilde{\mv{G}}=\tilde{\mv{H}}_{2,1}$, where $\tilde{\mv{H}}_{1,1}$ and $\tilde{\mv{H}}_{2,1}$ are given in Example 1. We assume there is no interference from the primary transmitter to the secondary receiver. With this channel setup, we maximize the capacity of the secondary link subject to the interference temperature (IT) constraint at the primary receiver as considered in \cite{Rui08}. The IT constraint is set such that the total received power from the secondary transmitter at the two antennas of the primary receiver needs to be no larger than 2. This problem is convex, and thus can be efficiently solved by CVX \cite{Boyd11}. The optimal transmit covariance for the secondary user is given as
\vspace{-5pt}\begin{align*}
\mv{S}_x^\ast=\left[\begin{array}{rr}5.7228 & 1.4217 \\ 1.4217  & 4.2772 \end{array} \right].
\end{align*}With this transmit covariance, the capacity of the secondary user is $6.7893$ bps/Hz.  It can be shown that the eigenvectors of $\mv{S}_x^\ast$ are different from the right-singular vectors of $\tilde{\mv{G}}$. As a result, the channel SVD based linear precoding/decoding design given in (\ref{optimal transmit signal}) and (\ref{single-user signal eigenmode}) cannot be applied to diagonalize the secondary MIMO channel and achieve the capacity.

With Theorem \ref{theorem1}, a new pair of linear precoder and decoder for the secondary link is obtained as
\vspace{-3pt}\begin{align*}
\mv{V}=\left[\begin{array}{rr}-2.3544 & \hspace{-5pt}0.4238 \\ -0.9358  & \hspace{-5pt}-1.8443 \end{array} \right],  ~ \mv{U}=\left[\begin{array}{rr}-0.9870 & \hspace{-5pt}0.1607 \\ 0.1607  & \hspace{-5pt}0.9870 \end{array} \right].
\end{align*}With this new design, it can be shown that $\mv{V}\mv{V}^H=\mv{S}_x^\ast$. Moreover, $\mv{U}^H\tilde{\mv{H}}\mv{V}$ is a diagonal matrix, and the achievable rate of the secondary user is $6.7893$ bps/Hz, which is the same as its channel capacity.

\vspace{-5pt}
\section{Conclusion}\label{sec:Conclusion}
This letter studied the practical ``channel-diagonalizing'' linear precoding/decoding design to achieve the capacity of the point-to-point MIMO channel given an arbitrary transmit covariance when the channel is perfectly known at the transmitter and receiver. We proposed a closed-form solution for this problem and verified its effectiveness in various multi-user MIMO systems. This result was shown to be particularly useful for diagonalizing the MIMO channel and yet achieving the capacity in multi-user MIMO systems when the optimal transmit covariance of each user is not channel SVD based as in the conventional single-user MIMO.

\vspace{-5pt}

\end{document}